\newcommand{\bsm}[1]{\boldsymbol{#1}}
\newtheorem{lemma}{Lemma}  
\newtheorem{theorem}{Theorem}  
\def\BibTeX{{\rm B\kern-.05em{\sc i\kern-.025em b}\kern-.08em
    T\kern-.1667em\lower.7ex\hbox{E}\kern-.125emX}}
\begin{document}

\title{Scalable Syndrome-based Neural Decoders for \\Bit-Interleaved Coded Modulations}

\author{\IEEEauthorblockN{Gastón De Boni Rovella\IEEEauthorrefmark{1}\,\IEEEauthorrefmark{2}, Meryem Benammar\IEEEauthorrefmark{2}, Tarik Benaddi\IEEEauthorrefmark{3} , Hugo Meric\IEEEauthorrefmark{4}}
\IEEEauthorrefmark{1}TéSA Laboratory, Toulouse, France \\
\IEEEauthorrefmark{2}ISAE-SUPAERO, Université de Toulouse, France \\
\IEEEauthorrefmark{3}Thales Alenia Space, Toulouse, France \\
\IEEEauthorrefmark{4}Centre National d'Études Spatiales, Toulouse, France \\
Email: \{gaston.de-boni-rovella, meryem.benammar\}@isae-supaero.fr\vspace{-0.4cm}
}

\maketitle

\begin{abstract}
 In this work, we introduce a framework that enables the use of Syndrome-Based Neural Decoders (SBND) for high-order Bit-Interleaved Coded Modulations (BICM). To this end, we extend the previous results on SBND, for which the validity is limited to Binary Phase-Shift Keying (BPSK), by means of a theoretical channel modeling of the bit Log-Likelihood Ratio (bit-LLR) induced outputs. We implement the proposed SBND system for two polar codes $(64,32)$ and $(128,64)$, using a Recurrent Neural Network (RNN) and a Transformer-based architecture. Both implementations are compared in Bit Error Rate (BER) performance and computational complexity.
\end{abstract}\vspace{-0.1cm}


\section{Introduction}
With the recent introduction of 5G and beyond technologies, the interest in fast and reliable communication systems has increased in an unprecedented manner. At the same time, the ever-growing computational power has given machine learning a crucial role in future communication systems as a fast and robust alternative to some classical physical layer solutions like channel demodulation and decoding.

Early works in channel decoding \cite{Gruber_2017, OShea_2017} promptly faced the \textit{curse of dimensionality}, where the space of valid codewords was too large for a common Deep Neural Network (DNN) to explore and \textit{learn}. To tackle this problem, two main scalable alternatives were proposed: model-based solutions, that directly exploit the structure of the code \cite{Nachmani_2016,Nachmani_2021, Xu_2017}, and model-free solutions, that do not depend on the code and allow the integration of more sophisticated machine learning techniques \cite{Bennatan_2018_arxiv, Choukroun_2022, Lugosch_2018}. Model-based solutions are often neural extensions of the Belief Propagation (BP) algorithm that help tackle the negative impact of short cycles in BP decoding. However, implementations for semi-dense or dense codes, such as BCH or Polar codes \cite{Arikan_2009}, remain subpar.


A model-free approach, which we denote as Syndrome-Based Neural Decoder (SBND), was introduced more recently by Bennatan \textit{et al.} \cite{Bennatan_2018_arxiv}, and has since been implemented using different deep learning techniques \cite{Choukroun_2022, Caciularu_2021, Lugosch_2018,DeBoni_2023}. The main idea behind the SBND is to produce a symmetric decoder that does not depend on the codeword and can thus be trained with a unique codeword.  Although this provides us with a very promising framework, these works rely extensively on the properties of Binary Phase-Shift Keying (BPSK) and can be easily extended to Quadrature Phase-Shift Keying (QPSK). However, in order to allow for practical implementations, SBND has to be extended to higher-order modulations such as $M$-Quadrature Amplitude Modulation (QAM) and $M$-Phase-Shift Keying (PSK) for arbitrary $M$. In this work, we propose a decoder that can be directly applied to such linear modulation techniques. More particularly, we focus on Bit-Interleaved Coded Modulations (BICM) \cite{Caire_1998, Alvarado_2008} which, unlike classical coded modulation schemes, present the advantages of allowing the usage of any Forward Error Correction (FEC) code designed for memoryless channels, and of being more robust to burst errors.

The main difference between higher-order modulations and BPSK/QPSK is that the decoder is not directly fed with the channel output, but rather, with bit Log-Likelihood Ratios (bit-LLR) produced by the soft demodulator. Hence, in order to design an SBND for the BICM, we first start by characterizing the channel induced by bit-LLRs for two common modulation schemes. Next, we propose an SBND that extends that of \cite{Choukroun_2022, Caciularu_2021, Lugosch_2018,DeBoni_2023} to the case of high-order BICM. Finally, we analyze the performance of two main architectures for the neural-based decoders, namely, RNN-based and transformer-based, and compare their respective complexities. Finally, Section \ref{sec:conclusion} concludes the work.

The remainder of this work is organized as follows. Section \ref{sec:preliminary} introduces the system model and some preliminaries on channel modeling for BICM. In Section \ref{sec:decoder}, we describe the proposed decoding framework. Then, Section \ref{sec:experiments} presents two possible implementations for the deep learning-based portion of the decoder, compares experimentally their BER performances and analyzes their complexity. 
 
\noindent \textit{Notations:} 
Capital italic letters (e.g. $X$ and $\bsm{X}$) represent random variables and vectors whereas Roman and bold letters (e.g. $x$ and $\bsm{x}$) denote their respective realizations. Matrices are represented by non-italic capital letters (e.g. $\mathrm{H}$) and $\mathrm{I}_n$ denotes the $n\times n$ identity matrix. The Hadamard product between vectors is represented by $(.)$.  Sets are denoted by calligraphic letters $\mathcal{X}$, and for finite sets, $|\mathcal{X}|$ denotes the cardinality. $P_X(x)$ (resp. $P_{X|Y}(x|y)$) represents the probability distribution (resp. conditional) evaluated in $x$ (resp. $(x,y)$). Markov chains are denoted $X \leftrightarrow Y \leftrightarrow Z $ to mean $P_{Z|Y,X} = P_{Z|Y}$. $\mathds{P}$ (resp. $ \mathds{1}$) denotes the generic probability (resp. indicator) of an event. $[1:n]$ denotes the set of integers from $1$ to $n$ and $\lfloor \cdot \rfloor$ represents the floor function. The xor operation is noted $\oplus$.

\section{System model and preliminaries}\label{sec:preliminary}

\subsection{Bit-Interleaved Coded Modulations (BICM)}\label{sec:system_model}

\begin{figure}[htbp]
    \centerline{\includegraphics[width=\linewidth]{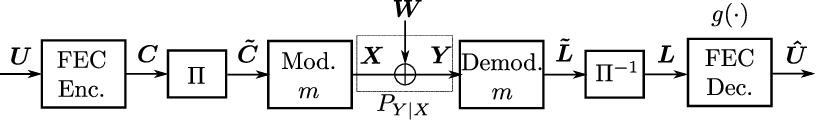}}
    \caption{General system model.}
    \label{fig:sys_model}
\end{figure}

Let us consider a BICM communication setting as depicted in Figure \ref{fig:sys_model}. In such a setting, an input binary message $\bsm{U}$ of $k$ bits, assumed to be independent and uniformly distributed, is first mapped through an $(n,k)$ linear block FEC encoder into a codeword $\bsm{C}$ of $n$ bits. Let us denote the parity check matrix of this code as $\mathrm{H}$. Then, a perfectly random interleaver $\Pi$, assumed to be known to the receiver as well, shuffles the bits of $\bsm{C}$ into an $n$-bit sequence $\bsm{\tilde{C}}$, which is then passed through a modulator. The complex-valued constellation is denoted by $\mathcal{X}$ and is assumed to be of order $m$ (i.e. $M=2^m$ states). The sequence of $n^\prime = n/m$ symbols $\bsm{X}$ is then fed to an Additive White Gaussian Noise (AGWN). Its output can be modeled as
    \begin{equation}
        \bsm{Y} = \bsm{X} + \bsm{W}, \ \text{s.t.} \  \bsm{W} \overset{\text{i.i.d}}{\sim} \mathcal{CN}(\bsm{0}, \sigma^2 \mathrm{I}_{n^\prime}) .
    \end{equation} 
  The demodulator, upon reception of the channel output $\bsm{Y}$, computes the bit-Log Likelihood Ratios (bit-LLR) associated with each bit $C$ given the corresponding received signal $Y$. The resulting bit-LLRs $\bsm{\tilde{L}}$ are then de-interleaved back to the original bit order. Based on the obtained bit-LLRs $\bsm{L}$, the FEC decoder $g(\cdot)$ produces an estimate $\bsm{\hat{u}}$ of the $k$ transmitted bits given by  $\bsm{\hat{U}} = g(\bsm{L})$.   

In this work, we investigate the design of decoders that minimize the Bit-Error Probability (BEP) defined by 
 \begin{equation}
     P_e(g) \triangleq \dfrac{1}{k}\sum_{i=1}^k \mathds{P}(\hat{U}_i \neq U_i). 
 \end{equation} 

\subsection{BICM equivalent channel models}
The main advantage of the BICM communication setting, as opposed to standard (non-interleaved) coded modulation schemes, is that a perfectly random interleaver maps every bit in $\bsm{C}$ evenly to each one of the $m$ bit positions in the constellation mapping. Hence, throughout the transmission, all bits in the sequence $\bsm{C}$ end up experiencing the same channel, which is averaged out over all bit positions.

An equivalent channel model for BICM was introduced in \cite{Caire_1998} and is given in Figure \ref{fig:BICM_channel_model}, where $P^s_{Y|C}$ denotes the effective channel distribution experienced by a bit transmitted over a position $s$ in the constellation mapping, $f^s(\cdot)$ denotes the bit-LLRs function (depending on the bit position) and is given for all $y \in \mathds{C}$ and $s \in [1:m]$ by  
\begin{equation}
    f^s(y) \triangleq \log\left( P^s_{Y|C}(y|0) \right)- \log\left(  P^s_{Y|C}(y|1)  \right),
\end{equation}
and the last operation is simply a decomposition into a hard decision $L^b \triangleq \mathds{1}(L<0)$ and a reliability measure $|L|$. 

\begin{figure}[htbp]
    \centerline{\includegraphics[width=0.77\linewidth]{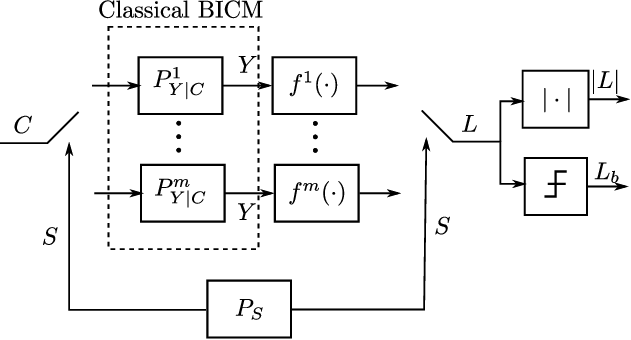}}
    \caption{BICM channel model extended to bit-LLRs.}\vspace{-0.4cm}
    \label{fig:BICM_channel_model}
\end{figure}
 
In order to motivate our proposed decoder of Section \ref{sec:decoder}, we need to characterize three channel distributions $P_{L^b|C}$. To this end, we will recall the results on $P_{Y|C}$ and $ P_{L|C}$, and extend them to model the channel $P_{L^b|C}$.

\subsubsection{Channel distributions $P_{Y|C}$ and $P_{L|C}$} Let us recall the equivalent channel models as of the BICM in Figure \ref{fig:BICM_channel_model}.

\begin{lemma}[BICM classical channel models \cite{Caire_1998, Alvarado_2008}]
An equivalent channel distribution of a classical BICM setting is given for all $\bsm{y} \in \mathds{C}^{n^\prime}$, $\bsm{l} \in \mathds{R}^n$, and $\bsm{c} \in \{0,1\}^n$ by \vspace{-0.15cm}
 \begin{equation*}
       P_{\bsm{Y}|\bsm{C}}(\bsm{y}|\bsm{c} ) \text{=} \prod_{i=1}^n P_{Y|C}(y_{\lfloor \nicefrac{i}{m} \rfloor }|c_i)   \ , \  P_{\bsm{L}|\bsm{C}}(\bsm{l}|\bsm{c} ) \text{=}  \prod_{i=1}^n P_{L|C}(l_i|c_i)
   \end{equation*}
    where, the distributions $P_{Y|C}$ and$P_{L|C}$ are given by\vspace{-0.15cm}
    \begin{IEEEeqnarray}{rCl}
    P_{Y|C}(y|c) &=& \dfrac{1}{m|\mathcal{X}_{c}^s|} \sum_{s=1}^m  \sum_{x \in \mathcal{X}_{c}^s} P_{Y|X} (y|x), \\
     P_{L|C}(l|c) &=& \dfrac{1}{m|\mathcal{X}_{c}^s|} \sum_{s=1}^m  \sum_{x \in \mathcal{X}_{c}^s} P^s_{L|X} (l|x),
\end{IEEEeqnarray}
and $\mathcal{X}_{c}^s$ is the set of symbols for which the $s$-th bit equals $c$.
  \end{lemma}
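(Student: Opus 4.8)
The plan is to establish the two product factorizations first, and then derive the single-symbol marginal channels $P_{Y|C}$ and $P_{L|C}$ by averaging over bit positions. For the product form of $P_{\bsm{Y}|\bsm{C}}$, I would start from the physical model: given $\bsm{\tilde C}$, the modulator groups the $n$ interleaved bits into $n' = n/m$ blocks of $m$ bits, maps each block to a constellation symbol $X_j \in \mathcal{X}$, and the AWGN acts independently on each symbol, so $P_{\bsm{Y}|\bsm{\tilde C}}(\bsm{y}|\bsm{\tilde c}) = \prod_{j=1}^{n'} P_{Y|X}(y_j \mid \mathrm{map}(\tilde c_{(j-1)m+1}, \dots, \tilde c_{jm}))$. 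The key conceptual step is the role of the perfectly random interleaver $\Pi$: averaging over $\Pi$ makes each coded bit $c_i$ equally likely to land in any of the $m$ label positions $s \in [1:m]$ and in any symbol block, and — crucially for BICM — the standard BICM modeling assumption treats the resulting per-bit channels as independent, which yields the claimed product $\prod_{i=1}^n P_{Y|C}(y_{\lfloor i/m\rfloor} \mid c_i)$ (the index $\lfloor i/m \rfloor$ simply records which symbol carries bit $i$). The factorization for $\bsm{L}$ then follows because the de-interleaved bit-LLR $l_i = f^{s(i)}(y_{\lfloor i/m\rfloor})$ is a deterministic (position-dependent) function of the corresponding received symbol, so the product structure is inherited, and the single-letter distribution $P_{L|C}$ is obtained by pushing $P_{Y|C}$ through $f^s$ and averaging over $s$.

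Next I would derive the explicit single-letter expressions. Fix a bit position $s \in [1:m]$ and condition on the event that a given coded bit $C=c$ occupies position $s$ in its symbol. The other $m-1$ bits of that symbol are, by the interleaver, uniform over $\{0,1\}^{m-1}$, so the symbol $X$ is uniform over the subconstellation $\mathcal{X}_c^s = \{x \in \mathcal{X} : \text{the } s\text{-th label bit of } x \text{ equals } c\}$, which has cardinality $|\mathcal{X}_c^s| = M/2 = 2^{m-1}$. Hence the channel seen by a bit at position $s$ is $P^s_{Y|C}(y|c) = \frac{1}{|\mathcal{X}_c^s|}\sum_{x \in \mathcal{X}_c^s} P_{Y|X}(y|x)$. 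Averaging uniformly over the $m$ positions (again the interleaver) gives
\begin{equation*}
P_{Y|C}(y|c) = \frac{1}{m}\sum_{s=1}^m P^s_{Y|C}(y|c) = \frac{1}{m|\mathcal{X}_c^s|}\sum_{s=1}^m \sum_{x \in \mathcal{X}_c^s} P_{Y|X}(y|x),
\end{equation*}
which is exactly the stated formula (with the mild abuse that $|\mathcal{X}_c^s|$ is constant in $s$, so it can be pulled out of the sum). For $P_{L|C}$, I would introduce the per-position LLR law $P^s_{L|X}(l|x)$ — the pushforward of $P_{Y|X}(\cdot|x)$ under $f^s$ — and repeat the identical averaging over the sub-constellation and over $s$, giving $P_{L|C}(l|c) = \frac{1}{m|\mathcal{X}_c^s|}\sum_{s=1}^m \sum_{x \in \mathcal{X}_c^s} P^s_{L|X}(l|x)$.

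The step I expect to be the main obstacle — or at least the one requiring the most care — is justifying the product (independence) structure rigorously. Strictly speaking, within a single symbol the $m$ bit-LLRs are \emph{not} independent (they are all functions of the same $Y_j$), and two coded bits mapped to the same symbol are correlated; the product form is the standard BICM \emph{approximation}, obtained by modeling the interleaver as ideal and invoking the classical BICM channel abstraction of Caire \emph{et al.} Since this lemma is explicitly attributed to \cite{Caire_1998, Alvarado_2008}, I would not re-derive that abstraction from scratch but rather cite it and present the single-letter marginals as the new/assembled content, making explicit that the product decomposition is understood in the BICM-equivalent-channel sense (i.e. after interleaving, each coded bit is modeled as passing independently through the averaged channel $P_{Y|C}$). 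The remaining bookkeeping — verifying $|\mathcal{X}_c^s| = 2^{m-1}$, checking that the averaging weights are $1/m$, and confirming the index map $i \mapsto \lfloor i/m \rfloor$ correctly associates bit $i$ with its symbol — is routine.
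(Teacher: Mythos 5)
Your proposal is correct and follows essentially the same route as the paper, which simply defers this lemma to the cited BICM literature (Caire \emph{et al.}, Alvarado, Sec.~3.4) and, in its own derivation of the single-letter law, uses exactly your argument: a uniform position variable $S$ over $[1:m]$, the symbol uniform over the subconstellation $\mathcal{X}_c^s$ given $(C,S)$, and the Markov chain $C \leftrightarrow (X,S) \leftrightarrow L$ (i.e.\ the pushforward of $P_{Y|X}$ through $f^s$). Your explicit caveat that the per-bit product form is the standard ideal-interleaver BICM abstraction rather than exact independence is accurate and consistent with how the references (and hence the paper) treat it.
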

    \begin{proof}
The proofs can be found in \cite[Section 3.4]{Alvarado_2008}. 
\end{proof}
Note that the distributions $P^s_{L|X}$, for $s \in [1:m]$, can be easily obtained for BPSK and QPSK modulations; however, they are very challenging to obtain analytically for higher-order more generic modulation schemes. In this work, we do not seek to characterize these distributions in a closed form, but rather, will use them to derive an equivalent binary channel model for the hard decisions on the bit-LLRs. 

\subsubsection{Bit-LLRs binary channel distribution $P_{L^b|C}$} In order to motivate the structure of the decoder developed in this work, we need to characterize the channel distribution $P_{L^b|C}$. To this end, let us consider, for instance, the 16-QAM and 8-PSK constellations with Gray labeling shown in Figures \ref{fig:8-PSK-regions} and \ref{fig:16-QAM-regions}. 
 
\begin{theorem}[Bit-LLRs binary channel model]\label{thm:binary_channel_model}
   For all $\bsm{l}^b , \bsm{c} \in \{0,1\}^n$, the following equation holds: 
   \begin{equation}
       P_{\bsm{L}^b|\bsm{C}}(\bsm{l}^b|\bsm{c} ) = \prod_{i=1}^n P_{L^b|C}(l^b_i|c_i).
   \end{equation}
   Besides, for the 8-PSK and 16-QAM under Gray labeling, the channel $P_{L^b|C}$ can be approximated by
   \begin{equation}\label{eq:equivalent_BSC}
       \bsm{L^b} = \bsm{C} \oplus \bsm{W}^b \  \text{  s.t.  } \ \bsm{W}^b \overset{\text{i.i.d}}{\sim} \text{Bern}(q),
   \end{equation}
   where $\bsm{W}^b $ is independent of $\bsm{C}$ and $q \triangleq \dfrac{1}{m} \displaystyle \sum_{s=1}^m P^s_{L^b|C}(1|0) $. 
\end{theorem}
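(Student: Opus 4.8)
The plan is to split the statement into its two independent assertions and dispatch them in turn. For the factorization
$P_{\bsm{L}^b|\bsm{C}}(\bsm{l}^b|\bsm{c}) = \prod_{i=1}^n P_{L^b|C}(l^b_i|c_i)$,
I would start from the memoryless AWGN channel $P_{\bsm{Y}|\bsm{X}}$ and the per-symbol demodulator. Since the bit-LLR vector $\bsm{L}$ is obtained by applying the coordinatewise maps $f^s(\cdot)$ to $\bsm{Y}$ (after de-interleaving), and since $\bsm{L}^b = \mathds{1}(\bsm{L}<0)$ is again a coordinatewise (deterministic) function of $\bsm{L}$, the Markov chain $\bsm{C} \leftrightarrow \bsm{L} \leftrightarrow \bsm{L}^b$ holds and each $L^b_i$ depends only on $C_i$ through $L_i$. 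Concretely, I would invoke the already-established product form of $P_{\bsm{L}|\bsm{C}}$ from Lemma~1, write $P_{\bsm{L}^b|\bsm{C}}(\bsm{l}^b|\bsm{c})$ as an integral of $P_{\bsm{L}|\bsm{C}}(\bsm{l}|\bsm{c})$ over the orthant $\{\mathrm{sign}(l_i) = (-1)^{l^b_i}\}$, and push the product through the integral, which factors because the integration region is itself a product set. This yields the claimed factorization with $P_{L^b|C}(b|c) = \int_{(-1)^b l > 0} P_{L|C}(l|c)\, dl$, and in particular $P_{L^b|C}(1|0) = \mathds{P}(L<0 \mid C=0)$ is well defined per bit position — or rather, after the interleaver averaging, as the mixture $\tfrac1m\sum_s P^s_{L^b|C}(1|0)$, which is exactly the $q$ in the statement.

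For the second assertion — that for 8-PSK and 16-QAM under Gray labeling the per-bit channel is (approximately) the BSC $L^b = C \oplus W^b$ with $W^b \sim \mathrm{Bern}(q)$ — the content is really a \emph{symmetry} claim: one must show $P_{L^b|C}(1|0) \approx P_{L^b|C}(0|1)$, i.e. the crossover probability does not depend on the transmitted bit. Given that, the BSC representation is immediate (set $q = P_{L^b|C}(1|0) = \tfrac1m\sum_s P^s_{L^b|C}(1|0)$ and note independence of $W^b$ from $C$ follows from the per-bit channel being the same for $c=0$ and $c=1$). I would argue the symmetry geometrically, using Figures~2 and~3: for each bit position $s$, Gray labeling partitions the constellation into the two regions $\mathcal{X}_0^s$ and $\mathcal{X}_1^s$, and these regions are related by a reflection (across a coordinate axis for QAM, across a diameter for PSK) that is an isometry of the AWGN noise; consequently the LLR sign statistics for $c=0$ are the mirror image of those for $c=1$, giving $P^s_{L^b|C}(1|0) = P^s_{L^b|C}(0|1)$ exactly per position for the ideal max-log or true-LLR demodulator. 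The word ``approximated'' in the theorem accounts for the fact that the LLR magnitudes across positions are \emph{not} identically distributed, so $\bsm{W}^b$ is not exactly i.i.d. before interleaving; the random interleaver $\Pi$ makes each codeword bit see the uniform mixture over $s$, which restores both the identical-distribution and (to the extent that distinct symbols in a sequence map to near-independent noise realizations) the independence across $i$.

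The main obstacle I expect is making the independence-across-coordinates claim for $\bsm{W}^b$ rigorous: while $P_{\bsm{L}|\bsm{C}}$ genuinely factors (distinct symbols carry independent AWGN), the interleaver couples which constellation position each codeword bit occupies, so the bit positions $s$ seen by $C_i$ and $C_j$ are not independent for a \emph{fixed} interleaver — the ``perfectly random interleaver'' idealization is what lets one treat each bit as drawing its position $s$ uniformly and independently. I would handle this exactly as the BICM literature does: adopt the standard idealized-interleaver assumption stated in Section~\ref{sec:system_model}, under which the per-bit channel is the position-averaged mixture and the joint channel factors; this is the sense in which the identity $P_{\bsm{L}^b|\bsm{C}} = \prod_i P_{L^b|C}$ is to be read, and it is also why equation~\eqref{eq:equivalent_BSC} is flagged as an approximation rather than an exact equality. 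The symmetry step, by contrast, I expect to be clean and exact given the Gray-labeling geometry, so I would present it first and in the most detail, then note the two approximation caveats (non-identical LLR magnitudes across $s$, finite-length interleaving) explicitly.
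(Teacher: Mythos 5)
Your overall route matches the paper's: establish memorylessness of $P_{\bsm{L}^b|\bsm{C}}$ from the fact that $\bsm{L}^b$ is a coordinatewise deterministic function of $\bsm{L}$ together with the product form of $P_{\bsm{L}|\bsm{C}}$ from Lemma~1, then reduce the BSC claim to the symmetry $P_{L^b|C}(1|0)=P_{L^b|C}(0|1)$ and argue it via constellation symmetries that are isometries for the Gaussian noise. That part is sound and is essentially what the paper does (the paper additionally passes through the max-log approximation of the LLR to identify the decision regions $\mathcal{Y}^s_0,\mathcal{Y}^s_1$ explicitly before invoking the symmetries).

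The genuine gap is in your claim that the per-position symmetry is \emph{exact} for every bit position of 16-QAM, obtained from ``a reflection across a coordinate axis.'' That is true only for the sign-bit positions. For the amplitude (inner/outer) positions of Gray-labeled 16-QAM ($s=2,4$ in Figure~3), the sets $\mathcal{X}^s_0$ and $\mathcal{X}^s_1$ are not images of one another under any isometry that simultaneously maps $\mathcal{Y}^s_0$ to $\mathcal{Y}^s_1$: with one-dimensional levels at $\pm 1,\pm 3$ and decision threshold $|y|=2$, an inner point has two error boundaries while an outer point has one, so $P^s_{L^b|C}(1|0)$ and $P^s_{L^b|C}(0|1)$ differ by a tail term of the form $Q(3/\sigma)$ versus $Q(1/\sigma)$ alone. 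This is exactly why the paper treats those positions separately and proves only an approximate equality, valid for SNR $\gtrsim 0$\,dB, and it is the actual source of the word ``approximated'' in the theorem (together with the max-log LLR simplification) --- not, as you suggest, the non-identical LLR magnitudes across positions or finite-length interleaving, both of which are already absorbed into the idealized BICM channel of Lemma~1. To repair your argument you would keep the exact reflection argument for the sign bits (and for all three 8-PSK bits, where conjugation, negation, and a $\pi/2$ rotation do the job), and replace the claimed exact symmetry for $s=2,4$ of 16-QAM by an explicit estimate showing the two crossover probabilities agree up to a Gaussian tail that is negligible above low SNR.
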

\begin{proof}
    The proof is relegated to Appendix \ref{app:proof_binary_channel_model}. 
\end{proof}
The present Theorem states that for the 8-PSK constellation under Gray labeling, the channel $P_{L^b |C}$ is in fact a memoryless stationary Binary Symmetric Channel (BSC), as described in \eqref{eq:equivalent_BSC}. As for the 16-QAM channel, we show that this channel model is valid for a wide range of Signal-to-Noise Ratios (SNR), i.e., except for very low SNR. Although we prove this theorem for only these two constellations, which are the most common in practice, the result can be extended to M-PSK and M-QAM constellations following the same lines of the proof.  

\section{Proposed system: SBND for BICM}\label{sec:decoder}
In what follows, we start by stating a result on Maximum A Posteriori (MAP) decoding, then we review existing literature on SBND for the BPSK and QPSK modulations and introduce our proposed solution.
\subsection{Decoding as a binary noise detection problem}\label{sec:cw_to_mess}
The decoder architecture introduced in this section stems from a first result on the equivalence between decoding and denoising.
First, let us recall that, given the channel model in Figure \ref{fig:sys_model}, the optimal decoding rule $g(\cdot)$ is the MAP rule, and is given by 
\begin{equation}
    \bsm{\hat{u}} = g(\bsm{l}) \triangleq \underset{\bsm{u} \in \{0,1\}^k}{\textrm{argmax}} P_{\bsm{U}|\bsm{L}} (\bsm{u}|\bsm{l}). 
\end{equation}
Let us consider the result of Theorem  \ref{thm:binary_channel_model} in \eqref{eq:equivalent_BSC}, and let us define the pseudo-inverse function of the FEC encoder as $p_{inv}(\cdot)$, i.e., $\bsm{u} = p_{inv}(\bsm{c})$. 
Then exploiting the linearity of $p_{inv}(\cdot)$, we can show that there exists a binary noise sequence $\bsm{W}^b_u$ independent from $\bsm{U}$ such that 
\begin{equation}
    p_{inv}(\bsm{L}^b) = \bsm{U} \oplus  \bsm{W}^b_u.
\end{equation}
Hence, we rewrite the MAP decoding rule as
\begin{equation}
   \underset{\bsm{u} \in \{0,1\}^k}{\textrm{argmax}} P_{\bsm{U}|\bsm{L}} (\bsm{u}|\bsm{l}) =  p_{inv}(\bsm{l}^b)  \oplus \underset{\bsm{w}  \in \{0,1\}^k}{\textrm{argmax}} P_{\bsm{W}^b_u|\bsm{L}} (\bsm{w}|\bsm{l}) .
\end{equation}
Hence, MAP decoding of $\bsm{U}$ amounts to MAP detection of the binary noise --or \textit{bit-flip}-- sequence $\bsm{W}^b_u$. 

\subsection{Previous works on SBND for BPSK and QPSK modulations}\label{sec:SBND-BPSK}
Previous work from Bennatan \textit{et al.} \cite{Bennatan_2018_arxiv} proved that, under the assumption of a BPSK modulation and an AWGN channel, there exists a noise sequence $\bsm{W}$ such that 
\begin{eqnarray}
    \bsm{Y} = \bsm{X}.\bsm{W}  \text{ and } \bsm{Y}^b = \bsm{C}\oplus \bsm{W}^b,
\end{eqnarray}
where $\bsm{Y}^b$ and $\bsm{W}^b$ denote the binary hard decisions of $\bsm{Y}$ and $\bsm{W}$. 
From this, they showed that the knowledge of the received signal's module $|\bsm{Y}|$ and the hard-decision syndrome $\mathrm{H}\bsm{Y}^b$ is enough to estimate the codeword binary noise $\bsm{W}^b$, i.e., 
\begin{equation}
    P_{\bsm{W}^b|\bsm{Y}} (\bsm{w}^b|\bsm{y})  =  P_{\bsm{W}^b| \, |\bsm{Y}|, \mathrm{H}\bsm{Y}^b} (\bsm{w}^b  | \,|\bsm{y}|, \mathrm{H}\bsm{y}^b).
\end{equation} 
Moreover, since $|\bsm{Y}| = |\bsm{W}|$ and  $ \mathrm{H}\bsm{Y}^b = \mathrm{H}\bsm{W}^b$, the posterior distribution $  P_{\bsm{W}^b|\bsm{Y}} $ does not depend on the transmitted codeword $\bsm{C}$. Hence, we can train a neural network to approximate this posterior using only one codeword, as long as the noise $\bsm{W}$ remains random.

In a previous work from the authors \cite{DeBoni_2023}, the result of \cite{Bennatan_2018_arxiv} was improved by directly estimating the bit-flips on the information bits rather than on the codewords, for both systematic and non-systematic codes. To this end, we showed that there exists a binary noise sequence $ \bsm{W}^b_u$ such that 
\begin{equation}
    p_{inv}(\bsm{Y}^b) = \bsm{U} \oplus  \bsm{W}^b_u , 
\end{equation}
and proved that the posterior can be written as
\begin{equation}
    P_{\bsm{W}^b_u|\bsm{Y}} (\bsm{w}_u^b|\bsm{y})  =  P_{\bsm{W}^b_u| \, |\bsm{Y}|, \mathrm{H}\bsm{Y}^b} (\bsm{w}_u^b| \, |\bsm{y}|,\mathrm{H}\bsm{y}^b).
\end{equation}
and is also independent of the message $\bsm{U}$, which still allows single codeword training. 

The extension of all these results to QPSK modulation is straightforward. In this work, we seek to generalize syndrome-based decoding to arbitrary higher-order modulations when the decoder is fed with bit-LLRs rather than the channel output.
 
 \subsection{Proposed solution: SBND for BICM}\label{sec:SBND-BICM}
In the following, we build on the results of \cite{DeBoni_2023} and on Theorem \ref{thm:binary_channel_model} to suggest an SBND for higher-order modulations. We start by proving that $|\bsm{L}|$ and $\mathrm{H}\bsm{L}^b$ are sufficient statistics for the detection of $\bsm{W}^b_u$. 

\begin{theorem}[Sufficient statistics]\label{thm:sufficient_statistic} Considering the problem setting and the result of Theorem \ref{thm:binary_channel_model}, we have that 
 \begin{equation}
    P_{\bsm{W}^b_u|\bsm{L}} (\bsm{w}_u^b|\bsm{l})  =  P_{\bsm{W}^b_u| \, |\bsm{L}|, \mathrm{H}\bsm{L}^b} (\bsm{w}_u^b| \, |\bsm{l}|, \mathrm{H}\bsm{l}^b).
\end{equation}  
\end{theorem}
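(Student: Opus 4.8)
\emph{Proof strategy.} The identity to be shown says precisely that $(|\bsm{L}|,\mathrm{H}\bsm{L}^b)$ is a sufficient statistic of $\bsm{L}$ for $\bsm{W}^b_u$, i.e. it amounts to the Markov chain $\bsm{W}^b_u \leftrightarrow (|\bsm{L}|,\mathrm{H}\bsm{L}^b) \leftrightarrow \bsm{L}$: once the magnitudes $|\bsm{L}|$ and the syndrome $\mathrm{H}\bsm{L}^b$ are known, the signs $\bsm{L}^b$ carry no further information about $\bsm{W}^b_u$. The plan is to mirror the BPSK argument of \cite{Bennatan_2018_arxiv}, replacing the multiplicative decomposition $\bsm{Y}=\bsm{X}.\bsm{W}$ by one at the level of the bit-LLRs. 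Using the output symmetry of the per-position bit-LLR channels, $P^s_{L|C}(l\,|\,c)=P^s_{L|C}(-l\,|\,1-c)$ --- the same structural property underlying Theorem \ref{thm:binary_channel_model} --- together with the product form of $P_{\bsm{L}|\bsm{C}}$ recalled above, one can write $L_i=(-1)^{C_i}R_i$ for a noise vector $\bsm{R}=(R_1,\dots,R_n)$ whose conditional law given $\bsm{C}=\bsm{c}$ is a fixed product law not depending on $\bsm{c}$; hence $\bsm{R}$ is independent of $\bsm{C}$ (and of $\bsm{U}$). Setting $\bsm{W}^b\triangleq\mathds{1}(\bsm{R}<0)$ componentwise, this gives $|\bsm{L}|=|\bsm{R}|$ and $\bsm{L}^b=\bsm{C}\oplus\bsm{W}^b$, so that $(|\bsm{L}|,\bsm{W}^b)$ is a deterministic function of $\bsm{R}$, $\mathrm{H}\bsm{L}^b=\mathrm{H}\bsm{W}^b$ (as $\bsm{C}$ is a codeword), and $\bsm{W}^b_u=p_{inv}(\bsm{W}^b)$ is a function of $\bsm{W}^b$ alone, exactly as in Section \ref{sec:cw_to_mess}.

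Next I would carry out a Bayes computation conditioned on $\bsm{L}=\bsm{l}$. Almost surely $L_i\neq 0$, so $\bsm{L}$ is equivalent to the pair $(|\bsm{L}|,\bsm{L}^b)$, and once $|\bsm{R}|=|\bsm{l}|$ is fixed, $\bsm{R}$ is determined by its sign pattern $\bsm{W}^b$ and conversely. Writing $\bsm{r}_{\bsm{w}}$ for the unique realization with $|\bsm{r}_{\bsm{w}}|=|\bsm{l}|$ and sign pattern $\bsm{w}$, and using the independence $\bsm{C}\perp\bsm{R}$ together with the uniformity of $\bsm{C}$ on the code, a standard density manipulation gives
\begin{equation*}
 \mathds{P}\!\left(\bsm{W}^b=\bsm{w}\,\middle|\,|\bsm{L}|=|\bsm{l}|,\bsm{L}^b=\bsm{l}^b\right)\;\propto\;p_{\bsm{R}}(\bsm{r}_{\bsm{w}})\;\mathds{P}\!\left(\bsm{C}=\bsm{l}^b\oplus\bsm{w}\right),
\end{equation*}
the proportionality being in $\bsm{w}$. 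Since $\mathds{P}(\bsm{C}=\bsm{l}^b\oplus\bsm{w})=2^{-k}\,\mathds{1}\!\left(\mathrm{H}(\bsm{l}^b\oplus\bsm{w})=\bsm{0}\right)=2^{-k}\,\mathds{1}(\mathrm{H}\bsm{w}=\mathrm{H}\bsm{l}^b)$, both the right-hand side and the associated normalizing constant depend on $\bsm{l}^b$ only through $\mathrm{H}\bsm{l}^b=\mathrm{H}\bsm{L}^b$; therefore $P_{\bsm{W}^b|\bsm{L}}=P_{\bsm{W}^b\,|\,|\bsm{L}|,\mathrm{H}\bsm{L}^b}$. Applying the deterministic map $p_{inv}$ to $\bsm{W}^b$ then yields the claimed identity for $\bsm{W}^b_u$. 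Note that only the independence $(|\bsm{L}|,\bsm{W}^b)\perp\bsm{C}$ is used here, not the specific $\mathrm{Bern}(q)$ form of $\bsm{W}^b$ from Theorem \ref{thm:binary_channel_model}.

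The step I expect to be the main obstacle is the first one: establishing that not only the hard-decision noise $\bsm{W}^b$ --- which Theorem \ref{thm:binary_channel_model} already provides, in the BSC approximation --- but the \emph{joint} object $(|\bsm{L}|,\bsm{W}^b)$ is independent of $\bsm{C}$. This is where the output symmetry of the bit-LLR channels enters, and it is also what confines the result to the constellations covered by Theorem \ref{thm:binary_channel_model}: exactly for $8$-PSK with Gray labeling, and for $16$-QAM in the SNR range where the BSC model \eqref{eq:equivalent_BSC} is valid, with the general $M$-PSK/$M$-QAM case handled along the same lines. The remaining ingredients --- the almost-sure bijection $\bsm{l}\leftrightarrow(|\bsm{l}|,\bsm{l}^b)$, a density assumption on $\bsm{L}$ (otherwise one works with the appropriate conditional laws), and the linearity of $p_{inv}$ --- are routine.
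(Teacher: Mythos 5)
Your proposal is correct in structure, but it follows a genuinely different route from the paper. You adapt the Bennatan-style argument to the LLR level: decompose $\bsm{L}=(-1)^{\bsm{C}}.\bsm{R}$ with $\bsm{R}$ independent of $\bsm{C}$, then run Bayes with the uniform-over-the-codebook prior so that $P(\bsm{C}=\bsm{l}^b\oplus\bsm{w})\propto\mathds{1}(\mathrm{H}\bsm{w}=\mathrm{H}\bsm{l}^b)$, making the posterior depend on $\bsm{l}^b$ only through its syndrome. The paper instead never touches the distribution of the reliabilities: it completes $\mathrm{H}$ into the invertible map $\mathrm{B}=[\mathrm{H}^T,\mathrm{A}^T]$, so that conditioning on $\bsm{L}^b$ is equivalent to conditioning on $(\mathrm{H}\bsm{L}^b,\mathrm{A}\bsm{L}^b)$, observes $\mathrm{A}\bsm{L}^b=\bsm{U}\oplus\bsm{W}^b_u$, and discards $\mathrm{A}\bsm{L}^b$ by a one-time-pad argument using the uniformity of $\bsm{U}$ and its independence from $\bsm{W}^b_u$; only the bit-level model \eqref{eq:equivalent_BSC} of Theorem~\ref{thm:binary_channel_model} is invoked. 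The trade-off is exactly the obstacle you identified: your route needs the stronger LLR-level output symmetry, i.e.\ joint independence of $(|\bsm{L}|,\bsm{W}^b)$ from $\bsm{C}$, which holds exactly for Gray-labeled 8-PSK but only approximately for 16-QAM (the inner/outer bit positions break the exact sign symmetry of the LLR magnitudes), so your proof inherits an approximation stronger than the one Theorem~\ref{thm:binary_channel_model} already makes; the paper's algebraic route is shorter and nominally assumption-lighter, although a fully rigorous justification of dropping $\mathrm{A}\bsm{L}^b$ from the conditioning (one needs $\bsm{U}$ to remain uniform and independent of $\bsm{W}^b_u$ \emph{given} $(|\bsm{L}|,\mathrm{H}\bsm{L}^b)$, not just marginally) quietly leans on a joint-independence property of the same nature as the one you make explicit. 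So your version can be seen as a more transparent, if more demanding, account of what is actually being assumed.
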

\begin{proof}
    Proof is relegated to Appendix \ref{app:proof_sufficient_statistics}
\end{proof}
This previous theorem allows us to state that $|\bsm{L}|, \mathrm{H}\bsm{L}^b$ are sufficient to compute the posterior distribution of $\bsm{W}^b_u$. Hence, we suggest an implementation of the SBND for higher-order modulations as shown in Figure \ref{fig:SBND_modulations}. Observe that, as opposed to the BPSK and QPSK cases, the reliabilities $|\bsm{L}|$ are not independent of the transmitted symbols, and thus, one cannot train using only one codeword ($\bsm{c}= \bsm{0}$ for instance). This imposes training over randomly generated codewords, in order to ensure variability of the transmitted symbols.

\begin{figure}[htbp]
    \centering
   \includegraphics[width=\linewidth]{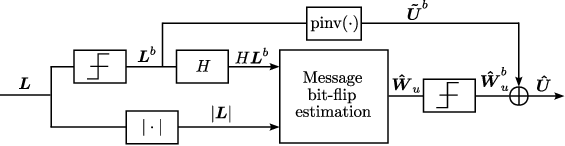}
    \caption{Suggested SBND for higher-order modulations}
    \label{fig:SBND_modulations}
\end{figure} 

\section{Experiments}\label{sec:experiments}
In this section, we briefly introduce two possible implementations of the bit-flip estimator of Section \ref{sec:SBND-BICM}, test the proposed decoder with both implementations, and compare both their performances and computational complexities. 

\subsection{Possible architectures and training}
Several DNN-based architectures can be considered for the bit-flip estimator. Due to their favorable performances and substantial differences in terms of complexity, the RNN estimator of \cite{Bennatan_2018_arxiv} and the transformer-based estimator of \cite{Choukroun_2022} were implemented, employing the message-wise approach of \cite{DeBoni_2023}. The main difference between these architectures lies in the number of weights that comprise each network and the number of operations needed to run each solution (see Section \ref{sec:complexity}). For a detailed description of these two solutions, refer to \cite{Bennatan_2018_arxiv, Choukroun_2022, DeBoni_2023}. 

Regarding the RNN-based estimator, it is a Gated Recurrent Unit (GRU)-based RNN in which each GRU cell \cite{Cho_2014} is composed of $\alpha(2n-k)$ GRU units, with $\alpha$ an arbitrary scaling parameter. The RNN consists of $d_l$ layers --i.e. $d_l$ stacked GRU cells-- and each unit performs $T$ time steps, with a final dense layer with a linear activation that outputs $\bsm{\hat{w}}_u$ of size $k$. Training is carried out with a batch size of $2^{12}$. 

As for the transformer architecture, it is determined by three hyperparameters: the embedding dimension $d_e$, the number of heads $d_h$ in the multi-head attention mechanism, and the number of encoder layers $N$ that are connected before the output dense layers. The large number of operations performed in the forward pass of the transformer architecture imposes a significantly smaller batch size, which is set to $2^8$. 

Both architectures have a final output dense layer with a linear activation that produces $\bsm{\hat{w}}_u$, which is then thresholded to obtain $\bsm{\hat{w}}^b_u = \mathds{1}(\bsm{\hat{w}}_u > 0)$. Training and testing were carried out using Google's TensorFlow library \cite{MartinAbadi_2015} and the Keras API \cite{Chollet_2015}, using the Adam optimizer \cite{Kingma_2014} with a learning rate of $\mu = 10^{-3}$ and a binary cross-entropy loss function. For both systems, codewords are generated using an AWGN of normalized SNR $E_b/N_0=5$dB. Important parameters are reported in Table \ref{tab:parameters}.


\renewcommand{\arraystretch}{1.5}
\begin{table}
\vspace*{0.2cm}
    \centering
    \begin{threeparttable}
        \begin{tabular}{c|c|c|c|c}
        \hline\hline
        RNN  & $\alpha=5$ & $T=5$ & $d_l=5$ & batch size $=2^{12}$  \\  
        \hline
        Transformer  &   $d_e=128$  &   $d_h=8$ & $N=10$ & batch size $=2^8$ \\
        \hline\hline 
        \end{tabular}
        \caption{Model and training parameters.}\vspace{-0.5cm}
        \label{tab:parameters}
    \end{threeparttable}
\end{table}

\subsection{Results}
Both of the considered architectures were applied to the decoding of two rate-$\nicefrac{1}{2}$ polar codes, namely the $(128, 64) $ and $(64,32)$ polar codes\footnote{The parity-check matrices were taken from the channel code database in https://rptu.de/en/channel-codes.}. An Ordered Statistics Decoding (OSD) algorithm is also added as a near-optimal decoding benchmark, along with an ML bound that records an error only when the OSD encounters a decoding failure \emph{and} the obtained codeword has a higher probability than the transmitted one.

The results are displayed in Figure \ref{fig:polar_codes}. Regarding the $(64,32)$ polar code, the RNN-based decoder presents a decoding performance that is very close to the OSD, and surpasses the transformer architecture for all the considered $E_b/N_0$. A similar conclusion can be drawn for the $(128,64)$ polar code, except this time, the gap between the neural-based and near-optimal solutions is more significant, especially for low-to-medium $E_b/N_0$ regions. It is worth mentioning that for the $(128,64)$ polar code, during training, both solutions have seen at most $10^{-8} \%$ of valid codewords, proving that the models are very much able to learn a proper decoding rule only by seeing a very small fraction of the data.

In the next section, we analyze the complexity involved in each solution and compare them accordingly.
\begin{figure*}
     \centering
     \begin{subfigure}[b]{0.49\textwidth}
        \centering
        \includegraphics[width=0.95\linewidth]{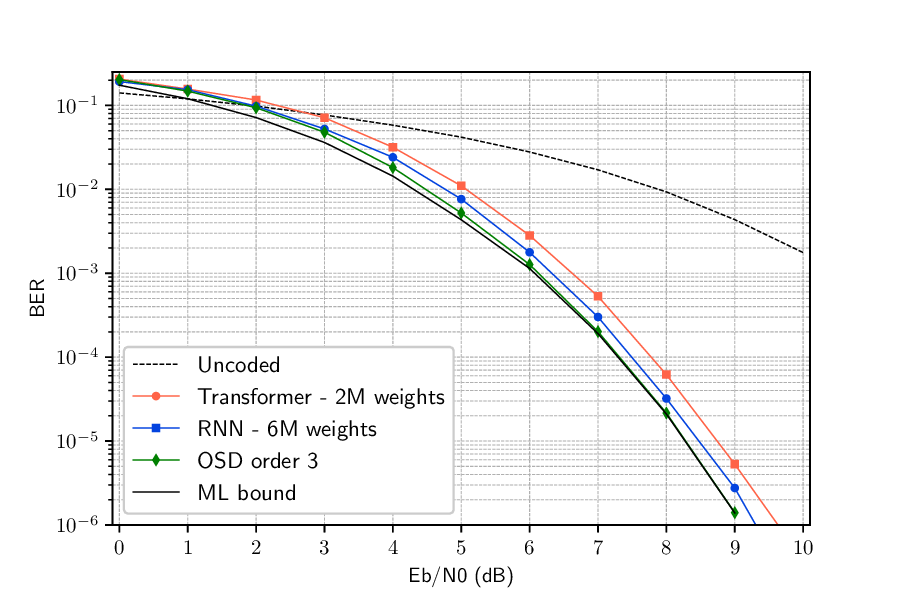}
        \label{fig:BER_128}
     \end{subfigure}
     \hfill
     \begin{subfigure}[b]{0.49\textwidth}
     \centering
        \includegraphics[width=0.95\linewidth]{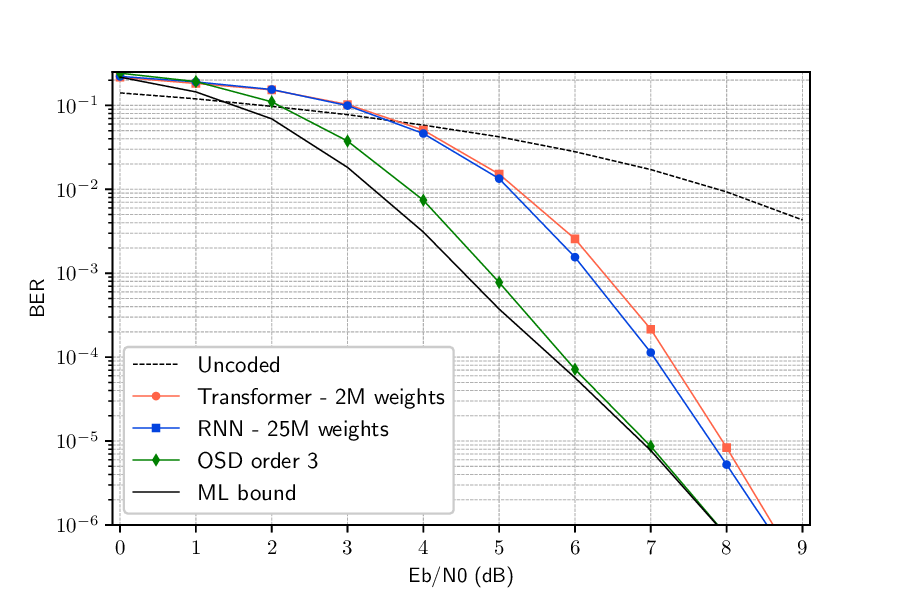}
        \label{fig:BER_64}
     \end{subfigure}
        \caption{Error rate studies for two rate $1/2$ polar codes:  $(64,32)$ (left) and $(128,64)$ (right).}\vspace{-0.5cm}
        \label{fig:polar_codes}
\end{figure*}

\subsection{Complexity analysis}\label{sec:complexity}
In this section, we study the complexity of each of the RNN-based and Transformer-based implementations of the decoder, both at the training and inference.

As shown in \cite{Dey_2017}, the number of parameters (or \textit{weights}) in a GRU unit --including biases, which were not included in the original work \cite{Cho_2014}-- is equal to $3(n_o^2 + n_in_o + n_o)$, where $n_i$ and $n_o$ depict the number of inputs and outputs, respectively. In the case of a dense layer, the number of weights is given by $n_in_o + n_o$. Applying these results to the RNN architecture gives the expression for the total number of weights of the RNN-based decoder for an $(n,k)$ linear code:
\begin{equation}
    \mathds{W}_{\text{RNN}}= 3(2d_l-1)\alpha^2r^2 + 3(r+d_l+\nicefrac{k}{3})\alpha r +k ,
\end{equation}
where $r\triangleq 2n-k$ is the size of the input vector given by $[\,|\bsm{l}|,\textrm{H}\bsm{l}^b \,]$. For the codes and hyperparameters selected, the following approximation holds to within a $0.5\%$ error margin,
\begin{equation}
    \mathds{W}_{\text{RNN}} \approx 3((2d_l-1)\alpha^2 + \alpha )r^2,
\end{equation}
yielding a total number of weights that increases as $\mathcal{O}(r^2)$ for a fixed network depth $d_l$ and scaling factor $\alpha$. 

Regarding the transformer architecture, the number of weights is computed with respect to the embedding dimension $d_e$, the input size $r$, and the number of encoder layers $N$:
\begin{equation}
    \mathds{W}_{\text{T}} = 12Nd_e^2 + (13N+r+3)d_e + (r+1)k + 1.
\end{equation}
Similarly to the previous case, an approximation can be made to within an error margin of $2\%$ for both studied codes:
\begin{equation}
    \mathds{W}_{\text{T}} \approx 12Nd_e^2 + 13Nd_e.
\end{equation}
Let us observe that, for the block sizes considered, the dominant terms do not contain the code parameters $(n,k)$, producing a network structure that does not depend on the code size as strongly as the RNN architecture. Observe, however, that there is a hidden dependence on the block length if we consider that larger codes may require larger embedding spaces and potentially more encoder layers. These values were kept the same throughout both implementations and hence, the number of parameters was almost exactly the same.

The final number of weights for each decoder is included in Fig. \ref{fig:polar_codes}. The transformer solution has a clear advantage in that the number of weights remains essentially the same as the input size increases, due to the embedding layer of fixed dimension $d_e\!=\!128$. However, the RNN is much more shallow than the transformer: it is composed of $d_l\!=\!5$ recurrent layers, whereas the transformer has the embedding layer, $N\!=\!10$ encoders --each consisting of two batch normalization layers, an attention layer and a dense layer-- and two final dense layers with batch normalization. Additionally, as shown in \cite{Vaswani_2017}, each self-attention mechanism entails a computational complexity of $\mathcal{O}(r^2 d_e)$, which significantly increases the inference time with respect to the RNN. Over several tests, the RNN decoded approximately $8$ times faster than the transformer for the $(64,32)$ polar code and $25$ times faster for the $(128,64)$ code\footnote{It must be factored in that empirical values are deeply reliant on the software implementation, hardware characteristics, and its parallel computing capabilities. Nonetheless, the obtained decoding latencies are consistent with the expected results.}.

\section{Conclusion}\label{sec:conclusion}
In this work, we have introduced a complete framework for decoding linear block codes under a BICM setting. For this purpose, we have deduced an equivalent bit-LLR binary channel model for the 8-PSK and 16-QAM modulation techniques. Next, an SBND for higher-order modulation is proposed, which takes as input the bit-LLRs instead of the BPSK-modulated signal as previous works \cite{Bennatan_2018_arxiv, Choukroun_2022}. Finally, two possible architectures were implemented and compared, both in terms of performance and computational complexity.

\appendices

\section{Proof of Theorem \ref{thm:binary_channel_model}} \label{app:proof_binary_channel_model}
First, note that since that each $L^b$ is a deterministic function of $L$, and given that $ P_{\bsm{L}|\bsm{C}}$ is a memoryless channel, then so is $P_{\bsm{L}^b|\bsm{C}}$, i.e., for all $\bsm{l}_b, \bsm{c} \in \{0,1\}^n$, $ P_{\bsm{L}^b|\bsm{C}}(\bsm{l}^b|\bsm{c}) = \prod_{i=1}^n P_{L^b|C}(l^b_i|c_i).$
  
  To proceed with the proof, we will first prove that $P_{L^b|C}(0|0) = P_{L^b|C}(1|1)$. Consider the following result
     \begin{IEEEeqnarray}{rCl}
      &&   P_{L^b|C}(0|0)  = \dfrac{1}{m }\sum_{s=1}^m  P_{L^b|C,S} (0|0,s)   \\
     &\overset{(a)}{=}& \dfrac{1}{m| \mathcal{X}^s_0| }\sum_{s=1}^m \sum_{x \in \mathcal{X}^s_0}  P_{L^b|X,S} (0|x,s) \\
     &=& \dfrac{1}{m | \mathcal{X}^s_0|}\sum_{s=1}^m \sum_{x \in \mathcal{X}^s_0} \int_{\mathds{C}}  P_{L^b,Y|X,S} (0,y|x,s) dy \\
      &\overset{(b)}{=}&  \dfrac{1}{m | \mathcal{X}^s_0|}\sum_{s=1}^m \sum_{x \in \mathcal{X}^s_0} \int_{\mathcal{Y}_0^s} P_{Y|X,S} (y|x,s) dy \\
      &\overset{(c)}{=}& \dfrac{1}{m | \mathcal{X}^s_0|}\sum_{s=1}^m \sum_{x \in \mathcal{X}^s_0} \int_{\mathcal{Y}_0^s} P_{Y|X} (y|x) dy,
     \end{IEEEeqnarray} 
     where we define the decision region $\mathcal{Y}_0^s \triangleq \{ y \in \mathds{C},  f^s(y) > 0\}$, and where $(a)$ follows the unformity of the constellation, while $(b)$ follows from the Markov chain $C \leftrightarrow (X,S) \leftrightarrow L^b$, and $(c)$ from the Markov chain  $S \leftrightarrow  X \leftrightarrow Y$. 

    Next, we use a common simplification of the bit-LLRs, commonly known as \emph{approximate LLRs}, to write that:
     \begin{IEEEeqnarray}{rCl}
         l & = & \log\biggl(  \sum_{x \in \mathcal{X}_{0}^s} P_{Y|X} (y|x) \biggr) \!-\! \log\biggl(   \sum_{x \in \mathcal{X}_{1}^s} P_{Y|X} (y|x)) \biggr) \\
          &\approx& \log\left( \max_{x \in \mathcal{X}_{0}^s} P_{Y|X} (y|x) \right)   - \log\left(\max_{x \in \mathcal{X}_{1}^s} P_{Y|X} (y|x) \right) \\
          &=& \dfrac{1}{\sigma^2} \bigl( \min_{x \in \mathcal{X}_{1}^s} || y - x ||^2 - \min_{x \in \mathcal{X}_{0}^s} || y - x ||^2 \bigr) .
     \end{IEEEeqnarray}
        Hence, we can write that for all $ y \in \mathcal{C}$ and all $s$, 
        \begin{equation}
            l^b = \mathds{1} \bigl( \min_{x \in \mathcal{X}_{0}^s} || y - x ||^2 \geq  \min_{x \in \mathcal{X}_{1}^s} || y - x ||^2 \bigr)\label{eq:approx_LLR} . 
        \end{equation}
        
     Let us then consider the 8-PSK constellation with Gray mapping of Figure \ref{fig:8-PSK-regions}. Taking into account the simplification in \eqref{eq:approx_LLR}, we give in Figure \ref{fig:8-PSK-regions} the decision regions $\mathcal{Y}_0^s $ for $s \in [1:3]$. 
     \begin{figure}[htbp]
         \centering
         \includegraphics[width=0.9\linewidth]{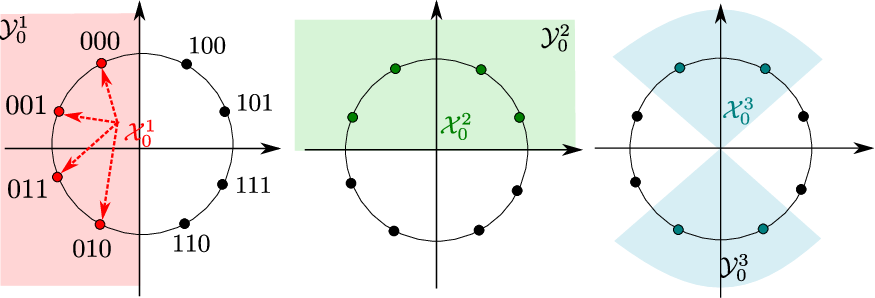}
         \caption{Decision regions of the 8-PSK constellation}
         \label{fig:8-PSK-regions}
     \end{figure}
    Let us now consider the case $s=1$. We have that $\mathcal{X}^1_0 = (\mathcal{X}^1_1)^\star$ and  $\mathcal{Y}^1_0 = (\mathcal{Y}^1_1)^\star$. Thus, by exploiting one of the symmetries of the normal Gaussian probability distribution $ P_{Y|X} (y|x) =  P_{Y|X} (y^\star|x^\star)$, one can easily prove that
    \begin{IEEEeqnarray}{rCl}\label{eq:symmetry}
 \sum_{x \in \mathcal{X}^1_0} \int_{\mathcal{Y}_0^1} P_{Y|X} (y|x) dy = \sum_{x \in \mathcal{X}^1_1} \int_{\mathcal{Y}_1^1} P_{Y|X} (y|x) dy . 
\end{IEEEeqnarray} 
Similar results can be proved for $s=2$ by noticing that $\mathcal{X}^2_0 = -  \mathcal{X}^2_1$,$\mathcal{Y}^2_0 = -  \mathcal{Y}^2_1$ and using the property  $P_{Y|X} (-y|-x)$. Finally, for $s=3$, note that $\mathcal{X}^3_0 =  \mathcal{X}^3_1 e^{j \pi/2}$ and $\mathcal{Y}^3_0 =  \mathcal{Y}^3_1 e^{j \pi/2}$ along with the symmetry $P_{Y|X} (y e^{j\phi}|x e^{j\phi}) = P_{Y|X} (y^\star|x^\star)$ for all $\phi$ yields the same result as \eqref{eq:symmetry}.
 Hence, recalling that $ | \mathcal{X}^s_0|=  | \mathcal{X}^s_1| $, it follows that 
    \begin{IEEEeqnarray}{rCl}
       P_{L^b|C}(0|0)  &= &  \dfrac{1}{m | \mathcal{X}^s_0|}\sum_{s=1}^m \sum_{x \in \mathcal{X}^s_0} \int_{\mathcal{Y}_0^s} P_{Y|X} (y|x) dy \\
      &=&   \dfrac{1}{m |\mathcal{X}^s_1|}\sum_{s=1}^m \sum_{x \in \mathcal{X}^s_1} \int_{\mathcal{Y}_1^s} P_{Y|X} (y|x) dy\\
      &=& P_{L^b|C}(1|1).   
     \end{IEEEeqnarray}  
       Hence, for the 8-PSK, the channel $P_{L^b|C}$ is a binary symmetric channel, with crossover probability $q$ given by 
       \begin{equation}
           q = P_{L^b|C}(1|0) = \dfrac{1}{m} \displaystyle\sum_{s=1}^m P^s_{L^b|C} (1|0).
       \end{equation} 

       Concerning the 16 QAM, we represent in Figure \ref{fig:16-QAM-regions} the decision regions $\mathcal{Y}_1^s $ for $s \in [1:4]$. 
           \begin{figure}[htbp]
         \centering
         \includegraphics[width=\linewidth]{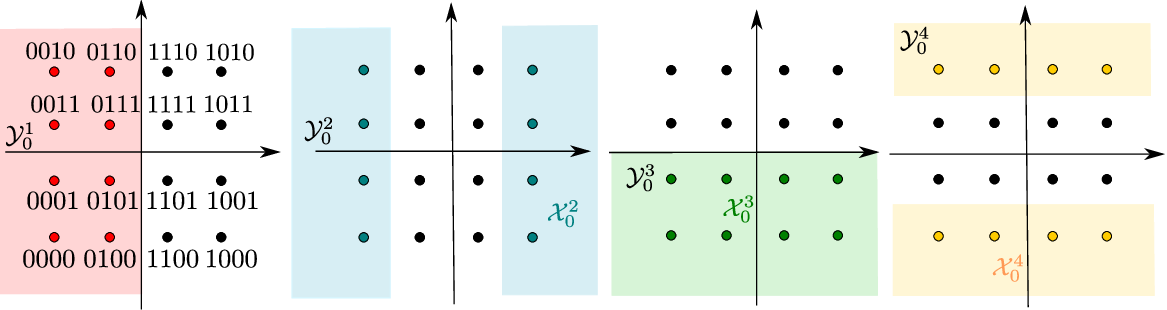}
         \caption{Decision regions of the 16-QAM constellation}
         \label{fig:16-QAM-regions}
     \end{figure}
       It can be easily seen that, for $s = 1$ and $s=3$, using the symmetries of $ \mathcal{X}^s_c$ and of the Gaussian distribution, one can write that 
     \begin{IEEEeqnarray}{rCl}
 \sum_{x \in \mathcal{X}^s_0} \int_{\mathcal{Y}_0^s} P_{Y|X} (y|x) dy = \sum_{x \in \mathcal{X}^s_1} \int_{\mathcal{Y}_1^s} P_{Y|X} (y|x) dy . 
\end{IEEEeqnarray}
However, for $s=2$ and $s=4$, the previous equality does not hold in the sense that the integration over $\mathcal{Y}_1^s$ entails a larger clipping of the Gaussian distribution than the integration over $\mathcal{Y}_0^s$. However, if the SNR is not too low, SNR $\geq 0$ dB for a normalized 16-QAM constellation, one can show that both integrations yield the same probability. The proof follows then similarly to the 8-PSK case. \qed 
 
\section{Proof of Theorem \ref{thm:sufficient_statistic}}\label{app:proof_sufficient_statistics}
Let us start with two properties of an $(n,k)$ block code. i) The pseudo-inverse of a code $p_{inv}(\cdot)$ is defined by a $k \times n$ matrix $\mathrm{A}$ such that $\mathrm{A}\bsm{c} = \bsm{u}$; ii) the matrix $\mathrm{B}=[\mathrm{H}^T, \mathrm{A}^T]$, where $\mathrm{H}$ is the parity matrix of the code, is full rank, thus, invertible.\newpage

\noindent Next, we can write for $\bsm{w}^b_u \in \{0,1\}^k$ and $\bsm{l} \in \mathds{R}^n$
\begin{IEEEeqnarray}{rCl}
P_{\bsm{W}_u^b|\bsm{L} }(\bsm{w}^b_u  | \bsm{l})   &=& P_{\bsm{W}^b_u| \, |\bsm{L}|,\bsm{L}^b }(\bsm{w}_u^b | \, | \bsm{l}|,\bsm{l}^b ) \\
&=& P_{\bsm{W}^b_u| \, |\bsm{L}|,\mathrm{H}\bsm{L}^b, \mathrm{A}\bsm{L}^b }(\bsm{w}_u^b | \, | \bsm{l}|,\mathrm{H}\bsm{l}^b, \mathrm{A}\bsm{l}^b ) , \quad  
\end{IEEEeqnarray}
where we have used the fact that $\mathrm{B}=[\mathrm{H}^T, \mathrm{A}^T]$ is invertible. Next, recalling the result of Theorem \ref{thm:binary_channel_model} in \eqref{eq:equivalent_BSC}, we have that 
\begin{equation}
    \mathrm{A}\bsm{L}^b = \mathrm{A}\bsm{C} \oplus \bsm{W}^b_u =  \bsm{U} \oplus \bsm{W}^b_u .
\end{equation} 
Since, $\bsm{U}$ is i.i.d following a Bern$(0.5)$ distribution, and since $\bsm{W}^b_u$ is independent of $\bsm{U} $, then $A\bsm{L}^b$ is Bern$(0.5)$ and is independent of $\bsm{W}^b_u$. Hence, it follows that
\begin{equation*}
     P_{\bsm{W}^b_u| |\bsm{L}|,\mathrm{H}\bsm{L}^b, \mathrm{A}\bsm{L}^b }(\bsm{w}_u^b | | \bsm{l}|,\mathrm{H}\bsm{l}^b, \mathrm{A}\bsm{l}^b )\text{=}  P_{\bsm{W}^b_u| |\bsm{L}|,\mathrm{H}\bsm{L}^b }(\bsm{w}_u^b | | \bsm{l}|,\!\mathrm{H}\bsm{l}^b )
\end{equation*} 
 which completes the proof. \qed 
 
\bibliographystyle{IEEEtran}
\bibliography{Paper_decoder}

\end{document}